\newtheorem{teorema}{Theorem}[section]
\newtheorem{definicion}[teorema]{Definition}
\newtheorem{proposicion}[teorema]{Proposition}
\newtheorem{comentario}[teorema]{Remark}
\newtheorem{ejemplo}[teorema]{Example}
 \numberwithin{equation}{section}
\begin{document}
\begin{title}[General theory of non-reversible local dynamics]
{General theory of non-reversible local dynamics}
\end{title}
\clearpage\maketitle
\thispagestyle{empty}
\begin{center}
by
\end{center}
\bigskip
\begin{center}
\author{Ricardo Gallego Torrom\'e}
\end{center}
\begin{center}
\address{Department of Mathematics\\
Faculty of Mathematics, Natural Sciences and Information Technologies\\
University of Primorska, Koper, Slovenia}
\end{center}
\begin{abstract}
A general theory of dynamics is formulated with the aim of its application in emergent quantum mechanics. In such a framework it is argued that the fundamental dynamics of emergent quantum mechanics must be non-reversible.
\end{abstract}
\section{Introduction}
The nature and significance of time in quantum gravity is a fundamental problem upon which resolution partially lies the possibility of finding an unifying theory embracing both, quantum theory and the gravitational interaction. Any theory aimed to supersede quantum mechanics and general relativity needs to address the question of time. The current dominant schemes like loop quantum gravity or string theory, neglect the ontological character of time. Instead, in the development of hard quantum theories of gravity, where gravitation is described by a quantum field, they adopt a timeless block universe type description of physical systems, in consistence with the prevail general relativistic viewpoint.

This attitude, more than driven by experience, is driven by conviction, namely that quantum mechanics is fundamental and gravitation must be described by a quantum theory. The time-free block universe appears more as a consequence of the schemes. As such, it is again a consequence of the choice of other assumptions of the theory.

Recently, the possibility that gravity is also emergent has been discussed \cite{Ricardo2015a,Ricardo 2019a} from the  point of view a theory of emergent quantum mechanics  \cite{Ricardo06, Ricardo2014}. In the new approach to the foundations of quantum theory, all quantum degrees of freedom are associated with a coarse grained description of a fundamental dynamics with fundamental degrees of freedom. Although the fundamental dynamics is unspecified in the details, in order that the quantum mechanical description of dynamical systems arises as an emergent phenomena, the fundamental dynamics must possess several key properties. Among the properties that must definitely have, the new dynamics should be either a non-reversible dynamics or a reversible one. Therefore, a general theory of dynamics capable to address such question is needed, with the aim on the applicability  to emergent quantum dynamics.

Starting from general assumptions and integrating further hypotheses in the logical system when required, we discuss in this paper a general theory of dynamics. In this context, we observe that the most natural solution to the problem of reversibility versus non-reversibility falls in the side of non-reversibility.
The general theory of dynamics presented in this paper is based upon three notions: a general concept of time parameters as subsets of certain type of number fields, a general notion of the mathematical objects that change with time as sections of sheaves and a general form of dynamics as a flow rule on a geometric space of objects. In such a framework, a notion of non-reversibility is introduced and the above mentioned question on reversibility/non-reversibility is discussed. We show that Finsler geodesic flows are in general non-reversible and that quantum dynamical laws, despite they can be $T$-inversion invariant, are non-reversible as well. We argue that the fundamental non-reversibility discussed in our general theory of dynamics provides an arrow of time that must coincide, at least locally, with the thermodynamic arrow of time that we observe. Finally, we apply the general theory, inferring that it is more natural to assume non-reversibility than reversibility for a generic dynamics and in particular, for a fundamental dynamics. This conclusion applies to the dynamics of emergent quantum mechanics.

\section{General notion of dynamical system}
In the attempt to provide a consistent theory that supersedes the standard quantum mechanical description of Nature, one should be careful to do impose formal conditions driven by our limited experience at phenomenological  levels but that are not necessary at the fundamental level. The methodology that can help to achieve this objective is to construct a general theory of dynamics, driven by mathematical natural schemes. However, there are dangers with such a strategy, since an abusively too general theory of dynamics could bring to the mathematical framework null capacity of constraining the content and null strength to shape the possible physical laws.

A general theory of dynamics should necessarily contain a notion of time parameter, a well defined notion of mathematical objects that evolve and address what is a dynamical law. These three elements cannot be arbitrary and indeed, they are interconnected. We address them in this section, starting by explaining the notion of time parameter we will adopt.

\subsection{Notion of time parameter}
It is natural to start a general theory of dynamics by clarifying the notion of {\it time parameters} that can be used in the description of the dynamics. Given a number field
 $(\mathbb{K},+,\cdot)$, where there is an additive group operation
 $+:\mathbb{K}\times \mathbb{K}\to \mathbb{K}$
  on $\mathbb{K}$ and a multiplicative operation
  $\cdot:\mathbb{K}\times\mathbb{K}\to \mathbb{K}$ that determines a multiplicative abelian group on $\mathbb{K}^*=\,\mathbb{K}\setminus\{0\}$,
 the time  parameters that we will consider are subsets $J\subset\,\mathbb{K}$ subjected to the following restrictions:
 \begin{enumerate}

 \item The binary operation of addition $+:\mathbb{K}\times \mathbb{K}\to \mathbb{K}$ is required to be inherited by $J$ in the form of the operation $+:J\times J\to \mathbb{K}$. If the dynamics is associated with a flow law, then this requirement is a necessary condition.

 \item In order to define non-linear expressions with dynamical properties, it is necessary that the product of arbitrary elements of $J\subset\,\mathbb{K}$ to be well defined.

 \item  In order to define a notion of incremental quotients or derivative operation limits, it is required the existence of the inverse elements $(t_2-t_1)^{-1}\in\,\mathbb{K}$ for elements $t_2,t_1\in \,J$ close enough in an appropriate sense, except perhaps when $t_2=t_1$.

 \end{enumerate}
 For the purposes of the last point above, it is necessary to endow $J\subset \,\mathbb{K}$ with a quasi-metric function with values on $\mathbb{R}$, namely a function  $d_{\mathbb{K}}:\mathbb{K}\times \mathbb{K}\to \mathbb{R}$ satisfying the metric axioms except the symmetric axiom. The general notion of quasi-metric function is the following:
 \begin{definicion} Let ${\bf T}$ be a set and $\mathbb{K}'$ an ordered number field.
A quasi-metric is a function $\varrho:{\bf T}\times {\bf T}\to \mathbb{K}'$ such that
    \begin{enumerate}
    \item $\varrho(u,v)\geq 0$, for each $u,v\in\, {\bf T}$,
    \item $\varrho(u,v)=0$, iff $u=v\in \,{\bf T}$,
    \item $ \varrho(u,w)\leq \varrho(u,v)+\,\varrho(v,w)$, for each $u,v,w\in \,{\bf T}$.
    \end{enumerate}
\label{quasi-metric}
\end{definicion}
This notion of quasi-metric function is analogous to the one found in the literature \cite{Javaloyes et al.,Wilson1931}, except that we have substituted the real number field $\mathbb{R}$ by a generic ordered number field $\mathbb{K}'$.
In order to be able to formulate the triangle inequality for $d_{\mathbb{K}}$, the number field  $\mathbb{K}'$ must be an ordered field. Furthermore, a quasi-distance function determines a notion of (non)-symmetric distance on ${\bf T}$: the distance between two points $u,v\in\,{\bf T}$ is $d_{\bf T}(u,v):=\,\varrho(u,v)$.

Let us remark that, in order to have a well defined incremental quotient, it is enough to have defined an quasi-metric function of the form
\begin{align}
d_{\mathbb{K}}:\mathbb{K}\times \mathbb{K}\to \mathbb{K}'
\label{distance function in the field body}
\end{align}
From such a function $d_{\mathbb{K}}$, $J\subset\mathbb{K}$  inherits a quasi-metric function $d_J:\,J\times J\to \mathbb{K}'$.
Given a quasi-metric function $d_{\mathbb{K}}$,  there are two possibilities to define incremental quotients,
\begin{itemize}

\item There is a non-zero minimal distance $d_{\mathbb{K}min}>0$ such that if $t_1\, \neq t_2$, then $d(t_1,t_2)\geq d_{\mathbb{K}min}>\,0$. In this case, the incremental quotient of a function $\psi:J\to \mathcal{E}$ is defined by
\begin{align*}
\frac{d\psi}{dt}:=\,\frac{1}{t_2-t_1}\,\left(\psi(t_2)-\,\psi(t_1)\right)
\end{align*}
such that $d_{\mathbb{K}}(t_2,t_1)=\,d_{\mathbb{K}min}$.

\item In the case when $d_{\mathbb{K}min}=0$, as for instance it happens for the fields $\mathbb{Q}$, $\mathbb{R}$ or $\mathbb{C}$ endowed with the usual notion of distance, the incremental quotient limit of a map $\psi:J\to \mathcal{E}$, when it exists, is defined by the expression
\begin{align*}
\frac{d\psi}{dt}=\lim_{d_{\mathbb{K}}(t_2,t_1)\to\, 0}\,\frac{1}{t_2 -\,t_1}\,\left(\psi(t_2)-\psi(t_1)\right).
\end{align*}

\end{itemize}
Both possibilities can be unified in the form of a single notion:
\begin{align}
\frac{d\psi}{dt}:=\,\lim_{d_{\mathbb{K}}(t_2,t_1)\to \,d_{\mathbb{K}min}}\,\frac{1}{t_2 -\,t_1}\,\left(\psi(t_2)-\psi(t_1)\right),
\label{incremental quotient}
\end{align}
where $d_{\mathbb{K}min}=0$ for continuous number fields and it is assumed that the space $\mathcal{E}$ where $\psi(t)$ takes values allows to define $\psi(t_2)-\psi(t_1)$ and to take the incremental limit.

Let $\mathbb{K}$ be a number field endowed with a distance function $d_\mathbb{K}:\mathbb{K}\times \mathbb{K}\to \mathbb{K}'$. If $d_{\mathbb{K}min}\neq 0$, then $t_2-t_1$ is considered to be  {\it small} if $d_{\mathbb{K}}(t_2,t_1)=\,d_{\mathbb{K}min}$. If $d_{\mathbb{K}min}=0$, then $t_2-t_1$ is small if for all purposes when considering the relevant limit expression, the difference $(t_2-t_1)$ can be approximated by the neutral element $0\in\,\mathbb{K}'$.
\begin{comentario}
The formal relation $(t_2-t_1)\equiv 0\in\,\mathbb{K}$ could be interpreted in several ways. One way to understand it is in the context of certain limit expressions as in the evaluation of incremental quotients.

Also, from the above discussion it is natural to assume that the number field $\mathbb{K}$ is endowed with a quasi-distance function $d_{\mathbb{K}}$ and that such quasi-distance function is continuous. Otherwise, the absence of continuity in the quasi-distance function $\mathbb{K}$ could imply that the increasing quotients \eqref{incremental quotient} depend upon the sequence of elements $t_2$ in the neighborhood of $t_1$ in an unnatural way because of the particularities of the distance function $d_\mathbb{K}$.
\label{cometarion on standard and non-standard analysis}
\end{comentario}

Let us emphasize the following points:
\begin{itemize}
\item For the purposes of defining a dynamical theory, the field $\mathbb{K}$ does not need to be complete.

\item Although the number field $\mathbb{K}$ does not need to be commutative, but for mathematical convenience, we will assume that $d_{\mathbb{K}}$ is commutative.

\item Archimedean versus non-archimedean numerical fields. If the field $\mathbb{K}$ is ordered and contains the field of rational numbers $\mathbb{Q}$, one can consider if the archimedean property holds on $\mathbb{K}$. At the moment, there is no restriction on the assumption of the archimedean property. This point should be considered a technical point.
\end{itemize}

The above considerations suggest that a general notion of time parameter is provided by the following,
\begin{definicion}
Let $(\mathbb{K},+,\cdot,d_\mathbb{K})$ be an ordered number field equipped with a continuous quasi-distance function $d_\mathbb{K}:\mathbb{K}\times \mathbb{K}\to \mathbb{K}'$. A time parameter is a subset $J\subset \,\mathbb{K}$ such that
\begin{enumerate}
\item $(J,+)$ is a sub-group of $(\mathbb{K},+)$,

\item For any $t_1\in J$, there are elements $t_2\in\,J$ such that  $(t_2-t_1)\in\,\mathbb{K}$ is small.
\end{enumerate}
\label{definicion of time parameter}
 \end{definicion}
If $(J,+)\,=\,\mathbb{K}$ we say that the time parameter with values in $J$ is complete.

Let us consider a change of time parameter or {\it time re-parametrization}, that in the above setting, can be defined as follows. Consider a field automorphism $\theta:\mathbb{K}\to \mathbb{K}$. A change of parameter is a restriction $\theta|_J:J\to \mathbb{K}$. It is direct that the parameter $\theta(J)\subset \mathbb{K}$ is consistent with three conditions, namely with the definition of group law, with the possibility to combine the new parameter with other variables to provide non-linear expressions and with the notion of incremental quotient. In the case that $(J,+)$ is complete, then $\theta$ is an automorphism of $\mathbb{K}$.

\begin{ejemplo}
Typical examples of time parameters arise when $\mathbb{K}$ is the field of real numbers $\mathbb{R}$. But the field number $\mathbb{K}$ could be a discrete field such as the field of rational numbers $\mathbb{Q}$ or algebraic extensions of $\mathbb{Q}$. In both cases, the incremental quotients are well defined, with a minimal distance $d_{\mathbb{K}min}=0$.
\end{ejemplo}
\begin{ejemplo}
An example of finite number field that can be used to define time parameters for dynamics is the prime field of class $[k]$ module $p$ with $p$ prime,
\begin{align*}
\mathbb{Z}_p :=\{[0],[1],[2],[3],...,[p-1]\}.
\end{align*}
 In this case, there is a natural distance function $d_{\mathbb{Z}_p}:\mathbb{Z}_p\times \mathbb{Z}_p\to \,\mathbb{R}$ defined by the expression
\begin{align}
d_{\mathbb{Z}_p}([n],[m])=\,\{|n_0-m_0|,\,n_0\in\,[n],\,m_0\in\,[m],\,0\leq n_0,m_0\leq p-1\},
\label{distance function in Zp}
\end{align}
 which is a continuous function with the discrete topology of $\mathbb{Z}_p$.

 The following properties can easily be proved:
 \begin{enumerate}

  \item The induced topology in $\mathbb{Z}_p$ from the distance function $d_{\mathbb{Z}_p}$ and the discrete topology of $\mathbb{Z}_p$ coincide.

\item The minimal distance function for $d_{\mathbb{Z}_p}$ is $d_{\mathbb{Z}_p min}=1$.
It follows by an induction argument that the possible subsets $J\subset \mathbb{Z}_p$ that can be used as time parameters, according to definition \ref{definicion of time parameter}, coincide with $\mathbb{Z}_p$ itself. Effectively, if $[t_1]\in\,J\subset\,\mathbb{Z}_p$, then there is (by point 2. in definition \ref{definicion of time parameter}) another $[t_2]\in J$ such that $[t_2]=\,[t_1+1]=\,[t_1]+\,[1]$. Since $card (\mathbb{Z}_p )=\,p$  is finite, then it follows the result by repeating the argument.

\item $\mathbb{Z}_p$ with the distance topology induced from $d_{\mathbb{Z}_p}$ is Haussdorff separable. For this, given two points $[k_1]\neq \,[k_2]\in\,\mathbb{Z}_p$ it is enough to consider the balls
    \begin{align*}
    B([k_i],1/4):=\,\{[k]\in\,\mathbb{Z}_p\,s.t.\,d_{\mathbb{Z}_p}([k],[k_i])<1/4\},\,i=1,2
    \end{align*}
    of radius $1/4$.
    Then $B([k_1],1/4)=\,\{[k_1]\}$, $B([k_2],1/4)=\{[k_2]\}$  and are such that $B([k_1],1/4)\cap B([k_2],1/4)=\emptyset$.

\end{enumerate}

Furthermore, $\mathbb{Z}_p$ can be endowed with an order relation,
  \begin{align*}
 [n] <\,[m]\,\textrm{iff}\,\,\, n_0\,<m_0,\,\,\textrm{with}\, \,n_0\in\,[n],\,m_0\in\,[m],\,0\leq n_0,m_0\leq p-1.
 \end{align*}
According to out discussion above, $\mathbb{Z}_p$ can serve as the set time parameter for a dynamics.

  \end{ejemplo}

\subsection{Notion of configuration space and associated mathematical objects defined over it}
The second ingredient in the specification of a dynamics are the {\it mathematical objects} that  change with time due to the dynamics. In order to specify this concept, we start introducing a general notion of {\it configuration space} suitable for our purposes.

It is required that the configuration space $\mathcal{M}$ is equipped with a topological structure. The existence of a topology in $\mathcal{M}$ allows us to consider continuous dynamical laws. Continuity of the dynamical law is a useful condition to establish cause-effect descriptions of dynamical processes. Therefore, the topology on $\mathcal{M}$ must be consistent with all natural causal-effect relations admissible in the theory. Let us assume that a given quantity $E(t)$ does not evolve continuously on time. In the case it is difficult to follow the details of an evolution, one could instead consider differences on measurements $E_2-\,E_1$, where the detailed time dependence has been erased. Then a large change of the form $E_2-E_1$ can be associated either to a short time evolution $t_2/t_1 \approx 1$ through a local causal explanation, or it can associated to different values at different times $t_2/t_1>>1$  through a global causal explanation. Therefore, with a non-continuous law, that does not restrict the amount of change due to the small changes of the time parameter $t$, it is more difficult to discriminate a local causation from a global causation of a large change of the form $E_2-E_1$.
The absence of continuity does not imply a contradiction in the theory, but the identification of a global cause $t_1\to t_2$ with $t_2>>t_1$ of a large change $E_1\to E_2$ with $E_2>> E_1$ is more laborious problem than the identification of a local causes of small changes.

From the above argument, we will consider a theory of dynamics in the category of topological spaces and topological maps and  where the dynamics will be continuous maps.  Hence we adopt the following notion of configuration space,
\begin{definicion}
The {\it configuration space} $\mathcal{M}$  of a dynamical system is a topological space whose elements describe the state of the system.
\label{configuration space}
\end{definicion}
The configuration space $\mathcal{M}$ can be either a discrete set or a continuous set. Also, the time parameters can  be discrete or continuous. Respect to the topology, every set ${\bf T}$ can be endowed always with at least two topologies, namely the discrete and the indiscrete topology \cite{Kelley}. Between these two extreme topologies, there can be many others. In our case, a natural topology  on the configuration space $\mathcal{M}$ is the minimal topology where the Hausdorff separability condition holds. Obviously, for the discrete topology, the Hausdorff condition holds. But the discrete topology could be too restrictive for a particular dynamics.

Our notion of configuration space applies to both classical dynamical systems, where $\mathcal{M}$ is a classical configuration space, but also can be applied to quantum dynamical systems. For a quantum system, $\mathcal{M}$ is a projective Hilbert space $\mathcal{H}$ and the state of the system is described by elements of  $\mathcal{H}$.
A dynamical system will have associated a sub-domain of elements of $\mathcal{M}$ during the time evolution. Moreover, the dynamical variables can also be fields defined over the configuration space $\mathcal{M}$.

The type of mathematical structures over $\mathcal{M}$ that will be considered must allow the formulation of continuous laws for dynamics. For this purpose  we apply concepts from sheave theory  \cite{Godement}, \cite{Hirzebruch1978}, $\S$ 2. The notion of sheaf offers the natural setting to speak of fields over $\mathcal{E}$ as sections. Let us consider a $\mathcal{A}$-sheaf $(\mathcal{E},\pi_\mathcal{E},\mathcal{M})$, where $\pi_\mathcal{E}:\mathcal{E}\to \mathcal{M}$ is continuous. The algebraic structures of the stalks $\mathcal{A}_u =\,\pi^{-1}(u)$ are such that the corresponding composition operations are continuous on each stalk.
 Typical algebraic structures for the stalks $\mathcal{A}_u$ to be considered are $\mathbb{K}$-vector fields of finite dimension and algebraic geometric constructions.
  A section of a sheaf is a continuous map $E:\mathcal{M}\to \,\mathcal{E}$ such that $\pi_\mathcal{E}\circ E =\,Id_\mathcal{M}$. Then we propose the following notion of field,
 \begin{definicion}
 A field $E$ is a section of a sheaf $\pi_\mathcal{E}:\mathcal{E}\to \mathcal{M}$.
 \end{definicion}
 For an abelian sheaf, where the stalks are abelian groups, the zero section $x\mapsto 0_u\in \,\mathcal{A}_u$ is a continuous section. Any small continuous deformation of the zero section is also a continuous section.

 The set of sections of a sheaf $\mathcal{E}$ is denoted by $\Gamma\,\mathcal{E}$. The algebraic operations on the stalks induce operations on sections of $\Gamma \mathcal{E}$.

\subsection{Notion of dynamical law} The third ingredient that we need in the formulation of a general theory of dynamics is a notion of dynamical law compatible with the above notions of time parameters and configuration spaces.
\begin{definicion}
Given a configuration space $\mathcal{M}$, a number field  $(\mathbb{K},+,\cdot)$ and a time parameter $J\subset \mathbb{K}$,
a {\it local dynamics} or {\it flow} is a continuous map in the product topology of $J\times \,\mathcal{M}$,
\begin{align*}
\Phi:J\times \,\mathcal{M}\to \mathcal{M},\quad (t,u)\mapsto \Phi_t(u)
\end{align*}
such that
\begin{itemize}
\item The following group composition condition holds:
\begin{align}
\Phi(t_1,u)\circ \Phi(t_2,u)=\,\Phi(t_1+t_2,u),\quad
\label{composition law for Phi}
\end{align}
whenever both sides are defined.
\item
The condition
\begin{align}
\Phi_0(u)=\,u
\end{align}
 holds for every $u\in\,\mathcal{M}$, where $0$ is the neutral element of the sum operation $+:\mathbb{K}\times\mathbb{K}\to \mathbb{K}$.

\end{itemize}
\label{definitionofdynamics}
 \end{definicion}
 The relation \eqref{composition law for Phi} is very often  re-written in dynamical theory in the form
  $\Phi_{t_1+t_2}(x)=\,\Phi_{t_1}\circ \Phi_{t_2}$, where $\Phi_{t}=\Phi(t,\cdot)$.
 One can compare this notion of dynamics with standard notions of dynamical systems, for instance as discussed in  \cite{ArnoldAvez} or as discussed in \cite{Sternberg}, chapter 12.

 The term {\it local dynamics} refers to the fact that the outcome of the evolution depends pointwise on $\mathcal{M}$. Locality can be dropped from the notion of dynamics, but theory of non-local dynamics will be more involved.

 {\it Definition} \ref{definitionofdynamics} does not require that the time parameter set $J$ must be a subset of a number field, being only required that the addition operation $+:J\times J\to J$ is defined. However, in order to define incremental quotients as given by the expressions of the form \eqref{incremental quotient}  of fields defined over $\mathcal{M}$, $J$  must be a subset of a number field $\mathbb{K}$ to secure that the expressions of the form $ (\delta t)^{-1}=(t_2-t_1)^{-1}$ are defined for $t_2-t_1$ for sufficiently small but non-zero elements.

 If there is defined an order relation in the number field $\mathbb{K}$, then there is an induced order relation in $J$. In this case, time ordered sequences from $A\in \mathcal{M}$ to $B\in\mathcal{M}$ along the dynamics $\Phi$ can be defined and a chronological order can be attached to the evolution from $A$ to $B$ by means of $\Phi$.
  If the number field $\mathbb{K}$ is not ordered, then there is no notion of local time ordering for the dynamics $\Phi$. In such a case, there is no notion of global time ordering as it usually appears in relativistic spacetime models. From the general point of view discussed until here, the number field $\mathbb{K}$ that appears in the notion of dynamics does not need to be ordered, but the number field $\mathbb{K}'$ where the distance function $d_\mathbb{K}:\mathbb{K}\times\,\mathbb{K}\to \,\mathbb{K}'$ takes values, must be an ordered number field, in order to accommodate the existence of a quasi-metric function $d_{\mathbb{K}}$ in $\mathbb{K}$.

 {\it Definition} \ref{definitionofdynamics} is an straightforward generalization of the usual notion of dynamical system \cite{ArnoldAvez}. However, in {\it definition} \ref{definitionofdynamics} it is not required any explicit condition on a measure on the configuration space $\mathcal{M}$.
\begin{definicion}
The local dynamics $\Phi:J\times \mathcal{M}\to \mathcal{M}$ is complete if the time parameter set $J$ coincides with $\mathbb{K}$ as a set.
\label{complete dynamics}
\end{definicion}
 \begin{proposicion}
 Let $\Phi$ be a complete local dynamics.
The transformations $\{\Phi_t\}_{t\in\,J}$ according to {\it definition} \ref{definitionofdynamics} determine a group of transformations of $\mathcal{M}$.
\label{group of transformations}
\end{proposicion}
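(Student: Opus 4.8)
The plan is to verify directly that the family $\{\Phi_t\}_{t\in J}$ satisfies the axioms of a group under composition of maps, using the fact that completeness (Definition \ref{complete dynamics}) removes the qualifier ``whenever both sides are defined'' from the composition law \eqref{composition law for Phi}. Since $\Phi$ is complete, $J=\mathbb{K}$ as a set, and because $(J,+)$ is by hypothesis a subgroup of $(\mathbb{K},+)$ (Definition \ref{definicion of time parameter}), for every $t\in J$ we also have $-t\in J$; moreover each $\Phi_t:\mathcal{M}\to\mathcal{M}$ is a globally defined continuous map, obtained by restricting the continuous map $\Phi$ to the slice $\{t\}\times\mathcal{M}$.

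First I would establish closure: given $t_1,t_2\in J$, the composition law \eqref{composition law for Phi} yields $\Phi_{t_1}\circ\Phi_{t_2}=\Phi_{t_1+t_2}$, and $t_1+t_2\in J$ because $(J,+)$ is closed under addition, so the composite again belongs to the family. Second, the identity element is supplied by the normalization condition $\Phi_0(u)=u$, i.e. $\Phi_0=Id_\mathcal{M}$, with $0\in J$ the neutral element of $(\mathbb{K},+)$. Third, for the inverse of a given $\Phi_t$ I would take $\Phi_{-t}$: applying the composition law twice gives $\Phi_t\circ\Phi_{-t}=\Phi_{t+(-t)}=\Phi_0=Id_\mathcal{M}$ and likewise $\Phi_{-t}\circ\Phi_t=Id_\mathcal{M}$, so $\Phi_t$ is a bijection of $\mathcal{M}$ whose inverse is the continuous map $\Phi_{-t}$; in particular each $\Phi_t$ is a homeomorphism. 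Finally, associativity is automatic, since composition of maps from $\mathcal{M}$ to itself is always associative. Collecting these four points shows that $\{\Phi_t\}_{t\in J}$ is a group of transformations of $\mathcal{M}$.

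The only delicate point is the passage from the conditional composition law to an unconditional one: in the general Definition \ref{definitionofdynamics} the identity \eqref{composition law for Phi} is only asserted ``whenever both sides are defined'', which for a merely local (non-complete) flow need not hold for all pairs $(t_1,t_2)$ and all $u$. This is exactly where completeness is essential — it guarantees that the domain of $\Phi$ is all of $J\times\mathcal{M}=\mathbb{K}\times\mathcal{M}$, so that every composite and every inverse written above is genuinely defined and the equalities hold globally. A minor accompanying observation is that the construction realizes $t\mapsto\Phi_t$ as a homomorphism from $(\mathbb{K},+)$ into the group of homeomorphisms of $\mathcal{M}$, whose image is precisely the subgroup $\{\Phi_t\}_{t\in J}$; but the statement as given requires only the group property, for which the four axiom checks above suffice.
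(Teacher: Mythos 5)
Your proof is correct: the paper itself states Proposition \ref{group of transformations} without any proof, treating it as immediate, and your direct verification of the four group axioms (closure via the composition law \eqref{composition law for Phi}, identity via $\Phi_0=Id_\mathcal{M}$, inverses via $\Phi_{-t}$ with $-t\in J$, and automatic associativity of map composition) is exactly the standard argument the paper implicitly relies on. Your observation that completeness ($J=\mathbb{K}$ as a set) is what discharges the qualifier ``whenever both sides are defined'', so that every composite and inverse is globally defined and each $\Phi_t$ is in fact a homeomorphism, correctly identifies the only point where the hypothesis is actually used.
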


 The extension of the notion of local dynamics to the evolution of mathematical objects defined over the configuration space $\mathcal{M}$ is done in the following way. Let us consider a $\mathbb{K}$-module sheaf $\pi_{\mathcal{E}}:\mathcal{E}\to \mathcal{M}$ and a dynamics $\Phi:J\times \mathcal{M}\to \mathcal{M}$. Let $\varphi:\mathbb{K}\to \mathbb{K}$ be a  $\mathbb{K}$-isomorphism,
 \begin{align*}
 &\varphi(t_1+t_2)=\,\varphi(t_1)+\,\varphi(t_2),\\
 & \varphi(t_1 \cdot \,t_2)=\,\varphi(t_1)\cdot \,\varphi(t_2).
 \end{align*}
 The simplest case is to consider the identity map $\varphi= Id_\mathbb{K}$, but this restriction is not strictly necessary.
A continuous map $\Phi_{\mathcal{E}}:\mathbb{K}\times  \mathcal{E}\to \mathcal{E}$ is such that the diagram
 \begin{align}
\xymatrix{\mathbb{K}\times \mathcal{M} \ar[r]^{\Phi}   & \mathcal{M}\\
\mathbb{K}\times \mathcal{E} \ar[r]^{\Phi_{\mathcal{E}}} \ar[u]^{\varphi\times \pi_{\mathcal{E}}}  & \mathcal{E} \ar[u]_{\pi_{\mathcal{E}}} }
\label{commutativity of the diagram on E}
\end{align}
commutes.
 \begin{proposicion} Let $\Phi$ be a local dynamics on $\mathcal{M}$ and consider the induced continuous map $\Phi_\mathcal{E}$ on the sheaf $\mathcal{E}$ such that the diagram \eqref{commutativity of the diagram on E} commutes. Then for any section $E\in\,\Gamma \mathcal{E}$ there is an open  neighborhood $N\subset \mathcal{E}$ such that
\begin{align}
\Phi_\mathcal{E}(t_1+t_2,\cdot)=\,\Phi_\mathcal{E}(\varphi(t_2),\Phi_\mathcal{E}(\varphi(t_1),\cdot))
\label{composition Phi for E}
\end{align}
holds good on $N$, whenever both sides of the relation \eqref{composition Phi for E} are well defined.
\label{proposicion on law group for the dynamics}
\end{proposicion}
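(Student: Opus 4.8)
The plan is to transport the group composition law~\eqref{composition law for Phi}, already available on $\mathcal{M}$ by Definition~\ref{definitionofdynamics} and Proposition~\ref{group of transformations}, up to the total space $\mathcal{E}$ of the sheaf, using the only available link between $\Phi_\mathcal{E}$ and $\Phi$, namely the commutativity of the diagram~\eqref{commutativity of the diagram on E}. The structural fact that makes this transport possible is that the projection $\pi_\mathcal{E}:\mathcal{E}\to \mathcal{M}$ of a sheaf is a local homeomorphism; consequently a continuous lift of a given continuous map through $\pi_\mathcal{E}$ is determined, on a sufficiently small neighborhood, by its value at a single point. The section $E$ supplies such a base point $e_0=E(u)$, and the neighborhood $N$ is taken to be a neighborhood of $e_0$ on which $\pi_\mathcal{E}$ restricts to a homeomorphism onto its image and on which the nested evaluations of $\Phi_\mathcal{E}$ entering~\eqref{composition Phi for E} are defined.

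First I would read off from~\eqref{commutativity of the diagram on E} that, for $e\in\mathcal{E}$ with $\pi_\mathcal{E}(e)=u$, the point $\Phi_\mathcal{E}(t,e)$ lies in the stalk over $\Phi(\varphi(t),u)$, so that $\Phi_\mathcal{E}(t,\cdot)$ covers the base map $\Phi(\varphi(t),\cdot)$. Next I would project both sides of~\eqref{composition Phi for E} through $\pi_\mathcal{E}$ and verify that they land in the same stalk. In the transparent case $\varphi=\mathrm{Id}_\mathbb{K}$, the additivity $\varphi(t_1)+\varphi(t_2)=\varphi(t_1+t_2)$ together with the group law~\eqref{composition law for Phi} on $\mathcal{M}$ collapses both projected expressions to the single point obtained by composing the base flow, so that the two sides of~\eqref{composition Phi for E} become two continuous lifts of one and the same curve in $\mathcal{M}$; for a general $\mathbb{K}$-isomorphism $\varphi$ the same computation must be carried out with the reparametrization inserted consistently throughout the nested evaluation.

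To pass from equality of projections to genuine equality in $\mathcal{E}$ I would invoke the local homeomorphism structure of the sheaf: on the neighborhood $N$ where $\pi_\mathcal{E}$ is a homeomorphism, both lifts coincide with the local inverse of $\pi_\mathcal{E}$ applied to the common base curve, provided they agree at one point. Agreement at a point is secured by evaluating at $t_1=t_2=0$, where $\varphi(0)=0$ and the normalization $\Phi_\mathcal{E}(0,\cdot)=\mathrm{Id}_\mathcal{E}$, the lift of $\Phi_0=\mathrm{Id}_\mathcal{M}$, makes both sides equal to $e_0$. Continuity of $\Phi_\mathcal{E}$ then propagates this equality over the connected neighborhood $N$, which establishes~\eqref{composition Phi for E} on $N$ whenever both nested compositions are defined.

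The main obstacle is exactly this last passage: the commutativity of~\eqref{commutativity of the diagram on E} constrains only the image of $\Phi_\mathcal{E}$ under $\pi_\mathcal{E}$ and says nothing a priori about the lift itself, so the composition law cannot be obtained by algebraic manipulation alone but must be forced by the local uniqueness of lifts through the local homeomorphism $\pi_\mathcal{E}$ together with the initial-value anchor $\Phi_\mathcal{E}(0,\cdot)=\mathrm{Id}_\mathcal{E}$. Two further points explain why the statement is only local and conditional: one must track the $\mathbb{K}$-isomorphism $\varphi$ consistently through the two nested evaluations, the clean case being $\varphi=\mathrm{Id}_\mathbb{K}$ where the reparametrization is transparent, and one must restrict to a neighborhood $N$ small enough that $\pi_\mathcal{E}|_N$ is injective and that both sides of~\eqref{composition Phi for E} lie in the common domain where the partial flow $\Phi_\mathcal{E}$ is defined.
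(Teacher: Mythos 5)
Your proposal is correct and follows essentially the same route as the paper's proof: project both sides through $\pi_\mathcal{E}$ using the commutative diagram \eqref{commutativity of the diagram on E}, reduce the resulting base-level identity to the group law \eqref{composition law for Phi} via the additivity of $\varphi$, and then lift the equality back to $\mathcal{E}$ using the fact that $\pi_\mathcal{E}$ restricts to a homeomorphism on a suitable open neighborhood $N$. Your extra step of anchoring the two lifts at $t_1=t_2=0$ via $\Phi_\mathcal{E}(0,\cdot)=\mathrm{Id}_\mathcal{E}$ and propagating agreement by continuity is in fact a more careful treatment of the final lifting step (ruling out that the two sides sit in different sheets over the same base point), which the paper dispatches in a single sentence.
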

\begin{proof}
   The commutativity of the diagram \eqref{commutativity of the diagram on E} and the homomorphism law of the dynamics for $\Phi$, eq. \eqref{composition law for Phi} imply that
\begin{align*}
\pi_\mathcal{E}\circ \Phi_\mathcal{E}(t_1+t_2,e_u)& =\,\Phi(\varphi(t_1+t_2),u)\\
& =\,\Phi(\varphi(t_1)+\varphi(t_2),u)\\
& = \Phi(\varphi(t_1),\Phi(\varphi(t_2),u))\\
& =\,\Phi(\varphi(t_1),\Phi(\varphi(t_2),\pi_\mathcal{E}(e_u)))\\
& =\,\Phi(\varphi(t_1),\pi_\mathcal{E}\circ \Phi_\mathcal{E}(t_2,e_u))\\
& =\,\pi_\mathcal{E}\circ \Phi_\mathcal{E}(t_1,\Phi_\mathcal{E}(t_2,e_u)),
\end{align*}
for every element $e_u$ of the stalk $E_u$.
Since the restriction of  $\pi_\mathcal{E}$ in some open neighborhood $N\in \mathcal{E}$ is an homeomorphism, then it follows the relation \eqref{composition Phi for E}.
\end{proof}
 A continuous map $\Phi_{\mathcal{E}}:\mathbb{K}\times \,\mathcal{E}\to\mathcal{E}$ is and induced dynamics in $\mathcal{E}$ from $\Phi$. Given the sheaf $\pi_\mathcal{E}:\mathcal{E}\to \mathcal{M}$, the dynamics $\Phi_\mathcal{E}$ is not necessarily a morphism and it does not need to be unique. Even when $\pi_\mathcal{E}:\mathcal{E}\to \mathcal{M}$ is a $\mathbb{K}$-module sheaf, it does not imply $\Phi_\mathcal{E}$  that is linear.

An analogous notion can be formulated for the associated dynamics $\Phi_\mathcal{E}$ acting on sections of the sheaf $\pi_\mathcal{E}:\mathcal{E}\to \mathcal{M}$. It is determined by the commutativity of the diagram 
 \begin{align}
\xymatrix{\mathbb{K}\times \mathcal{M}\ar[d]_{\varphi^{-1}\times E} \ar[r]^{\Phi}   & \mathcal{M}\ar[d]^{E} \\
\mathbb{K}\times \mathcal{E} \ar[r]^{\Phi_{\mathcal{E}}}   & \mathcal{E}  }
\label{commutativity of the diagram on E 2}
\end{align}
for every section $E\in\,\Gamma \mathcal{E}$.

\subsection{On time re-parametrization invariance}
Our notion of dynamics assumes the existence of time parameters as subsets $J\subset\,\mathbb{K}$ furnished with several consistent composition and other conditions. For a general dynamics, time parameters can be taken quite arbitrarily. As a consequence, time parameters will lack an observational or phenomenological interpretation. Since there is no macroscopic observer attached to such parameters, it is natural to expect that in a consistent description of a physical dynamics, physical quantities must be independent of the choice of the time parameter for the fundamental dynamics. By this we mean that physical quantities are equivalent classes of mathematical objects $[\Upsilon_0]$ which are covariantly defined: for every time re-parametrization $\theta:J\to \mathbb{K}$ there is at least two representatives $\Upsilon,\widetilde{\Upsilon}\in\,[\psi_0]$ such that $\widetilde{\Upsilon}(\theta (t))=\,\Upsilon(t)$ for $t\in\,J$. Therefore, although for the description of the dynamics the introduction of time parameters is necessary, the dynamics that are relevant for fundamental physical applications must be time re-parametrization invariance.

\section{Notions of local reversible and local non-reversible dynamics}
\begin{definicion}
 The {\it time conjugated dynamics} associated to the dynamics $\Phi:J\times \,\mathcal{M}\to \mathcal{M}$ with $J\subset \mathbb{K}$ is a map
 \begin{align*}
 \Phi^c:J\times \,\mathcal{M}\to \mathcal{M}
 \end{align*}
  such that if $\Phi(t,A)=B$, then it must hold that $\Phi^c(t,B)=A$
 for every $A,B\in\, \mathcal{M}$.
 \label{conjugate dynamics}
\end{definicion}
Since $t_1+t_2=\,t_2+t_1$ for any pair of elements $t_1,t_2\in\mathbb{K}$, it follows that if $\Phi$ is a dynamics, then $\Phi^c $ is also a dynamics and both are defined using the same time parameter $J\subset\,\mathbb{K}$.
It is direct that the idempotent property holds good,
\begin{align}
\left(\Phi^c\right)^c=\,\Phi
\label{idempotent property of the conjugate}.
\end{align}

\begin{definicion}
Let  $\Phi_{\mathcal{E}}:J\times \Gamma\mathcal{E}\to\,\Gamma\mathcal{E}$ with $J\subset\,\mathbb{K}$ be a dynamics. The conjugate dynamics is a map
$\Phi^c_{\mathcal{E}}:J\times \Gamma\mathcal{E}\to\,\Gamma\mathcal{E}$ such that if $\Phi_\mathcal{E}(t,E_1)=E_2$, then $\Phi^c_{\mathcal{E}}(t,E_2)=\,E_1$ for every $E_1,E_2\in\,\Gamma\,\mathcal{E}$, where $\mathcal{E}$ is a $\mathbb{K}$-module sheaf $\pi_{\mathcal{E}}:\mathcal{E}\to \,\mathcal{M}$.
\end{definicion}

For the conjugate dynamics $\Phi^c_\mathcal{E}$, the idempotent property
\begin{align}
\left(\Phi^c_\mathcal{E}\right)^c=\,\Phi_\mathcal{E}
\label{idempotent property of the conjugate 2}
\end{align}
holds good.

In the category of topological spaces and continuous functions, the first natural criteria to decide if a given dynamics $\Phi$ is reversible is to assume the condition
\begin{align*}
\lim_{t\to 0}\Phi(t,E)=\,\lim_{t\to\,0} \Phi^c(t,E)
\end{align*}
for all $\Gamma \,E\in\,\mathcal{E}$.
But this condition is always satisfied in the category of topological spaces and topological maps, if $\Phi$ (and hence $\Phi^c$) are dynamics, since
\begin{align*}
\lim_{t\to\,0}\Phi(t,E)=\,\lim_{t\to\,0} \Phi^c(t,E)=E.
\end{align*}
Instead, we propose a notion of local irreversibility based upon the following
 \begin{definicion}
 Let ${\Phi}:J\times \mathcal{M}\to \,\mathcal{M}$ be a dynamics over $\mathcal{M}$. The dynamics $\Phi$ is non-reversible if there is a sheaf $\pi_\mathcal{E}:\mathcal{E}\to \mathcal{M}$ and a continuous function
 $\Omega: \Gamma\,\mathcal{E}\to \mathbb{K}$  such that for the induced dynamics $\Phi_{\mathcal{E}}:J\times\Gamma \mathcal{E}\to \Gamma\mathcal{E}$, the relation
 \begin{align}
 \Xi_\Omega:\Gamma\mathcal{E}\to \mathbb{K},\,E\mapsto\lim_{t\to\,d_{\mathbb{K}min}}\,\frac{1}{t}\,\left(\Omega\circ\Phi_\mathcal{E}(t,E)-\,\Omega\circ \Phi^c_\mathcal{E}(t,E)\right)\neq 0
 \label{definitionofnonreversibilityfunction}
 \end{align}
 holds good for all $E \in\, \Gamma\,\mathcal{E}$, whenever $t_1,t_2,t_1+t_2 \in J\subset \mathbb{K}$.

 A dynamics which is not non-reversible in the above sense is called reversible dynamics.

 A function $\Xi$ for which the condition \eqref{definitionofnonreversibilityfunction} holds will be called a {\it non-reversibility function}.
 \label{nonreversibledynamics}
 \end{definicion}
When $\Gamma\,\mathcal{E}$ is equipped with a measure, the non-reversibility condition  \eqref{definitionofnonreversibilityfunction} can be formulated for almost all $E\in\,\Gamma\,\mathcal{E}$, that is, for all subsects in $\Gamma \,E$ except maybe for subsets of measure zero in $\Gamma\,\mathcal{E}$.

If instead of the expression  \eqref{definitionofnonreversibilityfunction}, one considers as a criteria for non-reversibility the conditions
 \begin{align*}
\Delta:\Gamma\mathcal{E}\to \mathbb{K},\,E\mapsto &\lim_{t\to\,0}\,\left(\Omega\circ \Phi_\mathcal{E}(t,E)-\,\Omega\circ \Phi^c_\mathcal{E}(t,E)\right)\\
& =\, \lim_{t\to\,0}\,\left(\Omega\circ\Phi_\mathcal{E}(0,E)-\,\Omega\circ \Phi^c_\mathcal{E}(0,E)\right)\\
& =\, \Omega(E)-\,\Omega(E)=\,0,
 \end{align*}
since $\Phi_\mathcal{E}(0,E)=\,\Phi^c_\mathcal{E}(0,E)=\,E$. This result holds for any continuous dynamics $\Phi$. But the category of topological spaces with continuous functions as a maps is the natural category where to formulate our mathematical models. By the arguments discussed above, continuity is an essential ingredient for determinism, and for the construction of deterministic models. Hence this alternative definition does not allow to define a notion of local non-reversible law in the category of topological spaces.

 In order to apply the above notion of non-reversibility, it is  requirement that the number field $\mathbb{K}$ and the configuration space $\mathcal{M}$ admit a well defined notion of the limit operation $t\to\,d_{\mathbb{K}min}$ as it appears in the expression \eqref{definitionofnonreversibilityfunction} and also as it appears in the notion of incremental quotient limit, given by the expression \eqref{incremental quotient}. For example, the field of real numbers $\mathbb{R}$ and the discrete field of rational numbers $\mathbb{Q}$ have well defined notions of  the limit $t\to\,0$. Other examples are constituted by the field of complex numbers $\mathbb{C}$ and the prime field ${\mathbb Z_p}$. In such a case, the notion of limit $t\to d_{\mathbb{K}min}=1$ must be well defined.

 If a dynamics $\Phi$ is non-reversible, then it must exists  sheaf where a function $\Omega:\Gamma \,\mathcal{E}\to \mathbb{K}$ such that the non-reversibility function $\Xi_\Omega$ is different from zero. Conversely, a reversible dynamics $\Phi:J\times \mathcal{M}\to \mathcal{M}$ is such that for any sheaf $\mathcal{E}$ and any continuous function $\Omega:\mathcal{E}\to \mathbb{K}$, $\Xi_\Omega =0$. Therefore, it is  theoretically easier to check if a given dynamics is non-reversible than to check if it is reversible, because for the first option it is enough to find a function  $\Omega:\Omega: \Gamma\,\mathcal{E}\to \mathbb{K}$ such that the relation \eqref{definitionofnonreversibilityfunction} is satisfied, while for the second option  one needs to check that for all such functions $\Omega$, the non-reversibility $\Xi_\Omega$ is identically zero.

 Let us consider a non-reversible dynamics $\Phi:J\times \mathcal{M}\to \mathcal{M}$ such that for an associated dynamics $\Phi_\mathcal{E}$, there is a non-zero reversibility function $\Xi_\Omega\neq 0$. Because of the algebraic structures of the stalk $E_u$, it is possible to define the map
 \begin{align}
 Sym \Phi_\mathcal{E}:J\times \Gamma\mathcal{E}\to \,\Gamma \mathcal{E},\quad (t,E)\mapsto \,\frac{1}{2}\,\left(\Phi_\mathcal{E}(t,E)+\,\Phi^c_\mathcal{E}(t,E)\right).
 \label{symmetrization of the dynamics}
 \end{align}
For $Sym \Phi_\mathcal{E}$ the property \eqref{composition Phi for E} holds good. Furthermore, $Sym \Phi_\mathcal{E}$ one has that $\Xi_\Omega =0$, indicating that the operation of symmetrization in \eqref{symmetrization of the dynamics} is a form of reducing non-reversible to reversible dynamics. Repeating this procedure for any induced dynamics $\Phi_\mathcal{E}$ on each sheaf $\pi_\mathcal{E}:\mathcal{E}\to \mathcal{M}$, we can assume the existence of an induced dynamics $Sym \Phi:J\times \mathcal{M}\to \mathcal{M}$ that by construction will be reversible. When such dynamics  exists, $Sym \Phi$ is called the {\it symmetrized dynamics} on $\mathcal{M}$.

Let us consider conditions under which $\Xi_\Omega\equiv 0$. The smoothness of  $\Omega:\Gamma\mathcal{E}\to \mathbb{K}$ is expressed through formal Taylor's developments,
\begin{align*}
&\Omega\circ\Phi_\mathcal{E}(t,E)=\,\Omega \circ \Phi_\mathcal{E}(0,E)+\,t\,\Omega'\star d\Phi_\mathcal{E}(0,E)+\,\mathcal{O}(t^2),\\
&\Omega\circ \Phi^c_\mathcal{E}(t,E)=\,\Omega\circ\Phi^c_\mathcal{E}(0,E))+\,t\,\Omega'\star d\Phi^c_\mathcal{E}(0,E)+\,\mathcal{O}(t^2),\\
\end{align*}
where the differentials is of the form
\begin{align*}
\Omega'\star d\Phi_\mathcal{E}(0,E) & :=\,\frac{d{\Omega}(u)}{d u}\large|_{u=\Phi_\mathcal{E}(0,E)}\star \,\frac{d\Phi_{\mathcal{E}}(t,E)}{dt}\large|_{t=0}\\
& =\,\frac{d{\Omega}(u)}{d u}\large|_{u=E}\star \,\frac{d\Phi_{\mathcal{E}}(t,E)}{dt}\large|_{t=0},
\end{align*}
and
\begin{align*}
\Omega'\star d\Phi^c_\mathcal{E}(0,E) & :=\,\frac{d{\Omega}(u)}{d u}\large|_{u=\Phi^c_\mathcal{E}(0,E)}\star \,\frac{d\Phi^c_{\mathcal{E}}(t,E)}{dt}\large|_{t=0}\\
& =\,\frac{d{\Omega}(u)}{d u}\large|_{u=E}\star \,\frac{d\Phi^c_{\mathcal{E}}(t,E)}{dt}\large|_{t=0}.
\end{align*}
The $\star$-pairing is the natural pairing induced from $\Omega$ operating on $\Phi_\mathcal{E}(t,E)$ by the natural composition $\Omega\circ \,\Phi_\mathcal{E}$.
Therefore,  $\Xi_\Omega$ can be written formally as
\begin{align*}
\Xi_\Omega(E)& =\,\lim_{t\to\,0}\,\frac{1}{t}\,\left(t\,\Omega'\star \,d\Phi_\mathcal{E}(0,E)-\,t\Omega'\star \,d\Phi_\mathcal{E}(0,E))\right)\\
& =\,\Omega'\star\,d\Phi_\mathcal{E}(0,E)-\,\Omega'\star\,d\Phi^c_{\mathcal{E}}(0,E).
\end{align*}
Taking into account the above expressions, we have
\begin{align*}
\Xi_\Omega(E) =\frac{d{\Omega}(u)}{d u}\large|_{u=E}\star \,\frac{d\Phi_{\mathcal{E}}(t,E)}{dt}\large|_{t=0}
-\,\frac{d{\Omega}(u)}{d u}\large|_{u=E}\star \,\frac{d\Phi^c_{\mathcal{E}}(t,E)}{dt}\large|_{t=0},
\end{align*}
which is in principle, different from zero. This shows that the criteria to decide when a dynamics is non-reversible is well-defined in the category of topological spaces. 
\\
{\bf General form of the reversibility condition}. The non-reversibility function $\Xi$ is identically zero if and only if the condition 
\begin{align}
\Omega'\,\star \left(\,\frac{d\Phi_{\mathcal{E}}(t,E)}{dt}\large|_{t=0}\,- \frac{d\Phi^c_{\mathcal{E}}(t,E)}{dt}\large|_{t=0}\right) =0
\label{characterization of reversible dynamics}
\end{align}
holds good for every continuous function $\Omega: \Gamma\,\mathcal{E}\to \mathbb{K}$. If the $\star$-pairing is in appropriate sense invertible, the relation \eqref{characterization of reversible dynamics} can be re-written formally as a necessary condition that depends only on the dynamical law,
\begin{align}
\frac{d\Phi_\mathcal{E}(t,E)}{dt}\Big|_{t=0}-\, \frac{d\Phi^c_\mathcal{E}(t,E)}{dt}\Big|_{t=0}\equiv \,0.
\label{characterization of reversible dynamics 2}
\end{align}
Either the condition \eqref{characterization of reversible dynamics} or the condition \eqref{characterization of reversible dynamics 2} can be taken as the necessary and sufficient condition for reversibility of a local dynamics.
\bigskip
\\
{\bf Notion of non-reversible dynamics in configuration spaces endowed with a measure}. For spaces endowed with a measure, the criteria for non-reversibility is the existence of a non-reversibility function $\Xi_\Omega:\Gamma\mathcal{E}\to \mathbb{K}$ which is  non-zero except in a subset of measure zero. The relevant point is that,  for a given measure on $\mathcal{E}$, $\Xi_\Omega$ is non-zero almost everywhere during the evolution. If there is no such a function $\Omega:\Gamma\mathcal{E}\to \mathbb{K}$ for a dense subset in an open domain $\mathcal{U}_A\subset \mathcal{M}$ containing $A$, then one can say that the dynamics is reversible locally.
However,  strict non-reversible laws or strict reversible laws, namely dynamical laws which are non-reversible (resp. reversible) in the whole configuration space $\mathcal{E}$, avoids the introduction of a measure in $\mathcal{E}$ as we did in our definition \ref{nonreversibledynamics}.

\section{The relation between non-reversible dynamics and a generalization of the second principle of thermodynamics}
In order to introduce thermodynamic systems, let us consider the following product of topological spaces, 
\begin{align}
\widetilde{\mathcal{M}}=\,\prod^N_{k=1}\,\mathcal{M}_k,
\label{thermodynamic product space}
\end{align}
where each of the spaces $\mathcal{M}_k$ is by assumption the configuration space of a given dynamical system and $N$ is a large natural number. By a 
large integer $N$ we mean the following asymptotic characterization: there is a collection of maps $P[N]:\widetilde{\mathcal{M}}\to \mathbb{K}$ that depends upon $N=\,\dim(\mathcal{M})$ and such that 
\begin{align}
 P[N]=\,P[N-1]+{o}(N^\delta),
 \label{asymptotic property}
 \end{align}
  with $\delta <\,0$. That is, we assume the condition
\begin{align*}
\lim_{N\to \,+\infty} \frac{P[N]-P[N-1]}{N^\delta}\to 0.
\end{align*}
A topological space $\widetilde{\mathcal{M}}$ with this asymptotic property will be called a {\it thermodynamic space}, where a topological space since it embraces the interpretation of thermodynamic systems as the ones where it is possible to define local intensive and extensive functions of the whole system where fluctuation effects due to the detailed structure of the system can, with great accuracy, be disregarded (see for instance \cite{Kondepudi Prigogine 2015}, chapter 15). $P:\widetilde{\mathcal{M}}\to \mathbb{K}$ is called thermodynamic function. For this class spaces one can speak of thermodynamic sub-systems if $\mathcal{M}'_\hookrightarrow \,\widetilde{\mathcal{M}}$ are embedding for which the asymptotic conditions \eqref{asymptotic property} holds. Furthermore, since $\widetilde{\mathcal{M}}$ is by definition a large product space, our notion of thermodynamic space provides the setting for statistical interpretations of the maps ${P}:\widetilde{\mathcal{M}}\to \mathbb{K}$.
\begin{ejemplo}
Let us consider $\mathcal{M}=\prod^N M_k$ with each configuration space of the form $M_k \cong M$. Then each of the spaces ${M}_k$ is not a thermodynamic space, since if $\mathcal{M}\cong {M}_k$, then $N=1$ and the above asymptotic property does not hold. This is in agreement with the idea that a system composed by a single individual element that can be characterized within a given theory as fundamental, is not a thermodynamic system.
\end{ejemplo}

Let $\widetilde{\mathcal{M}}_c$ be a classical phase space and $J\subset \mathbb{R}$ open.
The notion of entropy in  classical equilibrium thermodynamics corresponds to a map of the form
\begin{align}
\Lambda_c:J\times \widetilde{\mathcal{M}}_c\to \mathbb{R},
\label{entropy function}
\end{align}
 such that
 \begin{enumerate}
 \item The function $\Lambda_c$ is extensive: for two thermodynamicly different classical thermodynamic spaces $\widetilde{\mathcal{M}}_{1c}\hookrightarrow \mathcal{M}_c$, $\widetilde{\mathcal{M}}_{2c}\hookrightarrow \,\mathcal{M}_c$ corresponding to two sub-systems of the thermodynamic system $\mathcal{M}_c$ with entropy functions  $\Lambda_{1c}:J\times \widetilde{\mathcal{M}}_{1c}\to \mathbb{R}$ and $\Lambda_{2c}:J\times \widetilde{\mathcal{M}}_{2c}\to \mathbb{R}$,  then it must follow that
 \begin{align*}
 \Lambda_c\left(t,(u_1,u_2)\right)\geq \,\Lambda_{1c}\left(t,u_1\right)+\,\Lambda_{2c}\left(t,u_2\right).
 \end{align*}

 \item For any thermodynamic system, it is non-decreasing with time,
 \begin{align*}
\frac{d}{d t} \Lambda_c\left(t,u\right)\geq 0.
 \end{align*}

 \item For any thermodynamic subsystem described by $\widetilde{\mathcal{N}}_c$ subset of $\widetilde{\mathcal{M}}_c$, it is non-decreasing with time.

  \end{enumerate}
  This is not a full characterization of the entropy function, but it will serve for our purposes and it is consistent with the properties of the Boltzmann entropy \cite{Tolman}.

Let us consider now a generalized form of the second principle of thermodynamics for general dynamical systems.
  We first extend the above  notion of entropy function by considering entropy functions as extensive maps of the form
  \begin{align*}
  \Lambda:J\times \widetilde{\mathcal{M}}\to \mathbb{K},
  \end{align*}
 where $\mathbb{K}$ is an ordered number field and $\widetilde{\mathcal{M}}$ is the product space of the form \eqref{thermodynamic product space} and such that the above properties $(1)-(3)$ of the classical entropy hold good for the function $\Lambda$. Also, this characterization can be applied to  {\it local entropy densities} $\Lambda_i$. The notion of local internal entropy density appears in the theory of linear non-equilibrium thermodynamics and are described for instance in \cite{Kondepudi Prigogine 2015}, chapter 15.
In this general context of thermodynamic systems and generalized form of entropy functions, the second principle of thermodynamics can be stated as follows:
\bigskip
\\
{\bf Generalized Second Principle of Thermodynamics}: {\it The dynamical change in the state describing the evolution of a thermodynamic system is such that the production of internal local density entropy function $\Lambda_i(u)$ for each of the $i=1,...,r$ thermodynamic sub-systems do not decrease during the time evolution.}
\\

In order to compare the  notion of local non-reversible dynamical law as understood in definition  \ref{nonreversibledynamics} with the notion of physical evolution according to the non-decrease of entropy of the second principle of thermodynamics, the first step is to discuss the relations between the relevant notions, namely the non-reversibility function \eqref{definitionofnonreversibilityfunction} and the notion of generalized entropy function. Let us consider the section $E\in\,\Gamma \mathcal{E}$ of the sheaf $\pi_\mathcal{E}:\mathcal{E}\to \mathbb{K}$, $\Phi_\mathcal{E}:J\times \Gamma \mathcal{E}\to \mathcal{E}$ the induced local dynamics and $\Omega:\Gamma \mathcal{E}\to \mathbb{K}$ a scalar function such that $\Xi_\Omega$ is locally positive. Then the correspondence between the generalized second principle and the notion of non-reversible law is such that
\begin{align*}
\frac{d}{dt}\large|_{t=0}\Lambda (t,u)\equiv \,\frac{d}{dt}\large|_{t=0}\left(\Omega\,\circ \Phi_\mathcal{E}(t, E (u))\,-\Omega\,\circ \Phi^c_\mathcal{E}(t,\, E (u))\right)=\Xi_\Omega(E(u)).
\end{align*}
In principle, this relation depends on the section $E:\mathcal{M}\to \mathcal{E}$ and the scalar map $\Omega$, revealing a different entropy function for each choice of the pair $(E,\Omega)$. For instance, the notion of local thermodynamic equilibrium states, given by the extremal condition  $\frac{d}{dt}\large|_{t=0}\Lambda (t,u)=0$, for a parameter $t\in J_0$ such that the equilibrium is found for $t=0$. But such equilibrium condition is not consistent with the dependence of $\Xi_\Omega$ on the section $E$ and also on the map $\Omega$. However, if the section $E$ is fixed by a physical principle, the relation between the formalism depends upon the choice of the function $\Omega$ only up to  a constant. As we will see below, this will be the case of quantum dynamics.

According to the above discussion, there must be two different notions of irreversibility. The first one is related to the notion of non-reversible dynamics according to definition \ref{nonreversibledynamics}. The second one is associated with the generalized second principle of thermodynamics and is applied to thermodynamic systems.
The  distinction between these two notions of irreversibility is important, since the notion of non-reversible dynamics based upon definition \ref{nonreversibledynamics} can be applied to a general dynamics. In particular, it can  be applied to fundamental systems described by points of the configuration space $\mathcal{M}$, where thermodynamic relations between thermodynamic variables is not justified. Despite this general situation, the two notions of non-reversibility can be identified in relevant examples, as we will discuss later.

\section{Examples of reversible and non-reversible dynamics}
 An interesting and relevant example of non-reversible dynamics is the following,
 \begin{ejemplo}
 Let $(M,F)$ be a Finsler space   \cite{BaoChernShen}, where $M$ is an $m$-dimensional manifold and $\mathbb{K}$ is the field of real numbers $\mathbb{R}$. For a point $A$ and for a variable point $X$ infinitesimally close to $A$, one considers the limits
 \begin{align*}
  &\lim_{s\to 0}\,\frac{1}{s}\,\Omega(\Phi_t(X))=\,\frac{d}{ds}\big|_{s=0}\,\int^{X(s)}_A\,F^2(\gamma,\dot{\gamma})\,dt,\quad\\
   &\lim_{s\to 0}\,\frac{1}{s}\,\Omega(\Phi^c_t(X))=\,\frac{d}{ds}\big|_{s=0}\,\int^A_{X(s)} \,F^2(\tilde{\gamma},-\dot{\tilde{\gamma}})\,dt.
 \end{align*}
  The dynamics $\Phi$ is given by the geodesic flow of $F$. The parameterized curves $\gamma$ and $\tilde{\gamma}$ are geodesics,
 where $\gamma$ is a curve joining $A$ and $X$ and realizing the minimal length for curves joining $X$ and $A$ (and analogously for the inverted geodesic $\tilde{\gamma}$).  The local existence of such geodesics is guaranteed by Whitehead theorem \cite{Whitehead1932}. However, these pair of geodesics are not related as they are in Riemannian geometry by a relation of the form $\tilde{\gamma}(s)=\,\gamma(1-s),\,s\in [0,1]$, since in general the Finsler metric $F$ is not necessarily a {\it reversible Finsler metric}.

  Let us  consider the expression
  \begin{align*}
 \Xi_\Omega(A) & =\,\lim_{s\to 0}\,\frac{1}{s}\,\Omega(\Phi_t(X))-\,\lim_{s\to 0}\,\frac{1}{s}\,\Omega(\Phi^c_t(X)) \\
 & = \,\frac{d}{ds}\big|_{s=0}\,\left\{ \int^{X(s)}_A\,\left(F^2(\gamma,\dot{\gamma})-\, F^2(\tilde{\gamma},-\dot{\tilde{\gamma}})\right)\,dt\right\}.
 \end{align*}
 It follows that
 \begin{align}
 \lim_{X\to A}\, \Omega(X)-\Omega^c(X^c)=\,F^2(A,V)-\, F^2(A,-V)
 \label{FmenosF}
 \end{align}
 along an integral geodesic curve $\gamma:I\to M$ passing through $A$ and $B$ and such that
 \begin{align*}
 A=\,\gamma(0)=\,\lim_{X\to A}\,\tilde{\gamma}, \quad V=\,\dot{\gamma}(0)=\,-\lim_{X\to A}\,\dot{\tilde{\gamma}}.
 \end{align*}
 For a generic Finsler metric, the limit \eqref{FmenosF} is different from zero. This is also true for geodesics and since the geodesics are determined locally by the initial conditions, via Whitehead's theorem \cite{Whitehead1932}, we have that the geodesic dynamics $\Phi$ is in this case non-reversible with
 \begin{align}
  \Xi_\Omega(A)=\,F^2(A,V)-\, F^2(A,-V),
 \end{align}
which is non-zero for almost all $(A,V)\in TM$.

This example is related with the notion of {\it reversibility function} in Finsler geometry \cite{Rademacher}, which is a measure of the non-reversibility of a Finsler metric. 
 \label{ejemplononreversibledynamics}
 \end{ejemplo}
\begin{ejemplo}
 The construction of {\it Example} \ref{ejemplononreversibledynamics} in the case when $(M,F)$ is a Riemannian structure provides an example of reversible dynamics. In this case the dynamics is the geodesic flow and the formal limit
  \begin{align*}
  \lim_{X\to A}\frac{1}{d(A,X)}(\Omega(X)-\Omega(X))=0
    \end{align*}
    along $\Phi$ holds good.
    This is also the case for the more general case of {\it reversible Finsler metrics}.
\end{ejemplo}
 \begin{ejemplo}
 Let us consider a quantum system, where pure states are described by elements $|\Psi\rangle$ of a Hilbert space $\mathcal{H}$. In the Schr\"odinger picture of dynamics \cite{Dirac1958, Hatfield}, the evolution law is given by the relation
 \begin{align}
 \Phi_\mathcal{H}(t,|\Psi (0)\rangle ) =\,\mathcal{U}(t)|\Psi (0)\rangle,
 \label{Dynamical law in quantum mechanics}
 \end{align}
where $\mathcal{U}(t)$ is the evolution operator \cite{Dirac1958}. The configuration space is the manifold of real numbers $\mathbb{R}$. We consider the sheaf over $\mathbb{R}$ with stalk isomorphic to the Hilbert space $\mathcal{H}$. Thus the sheaf we consider is of the form $\mathcal{E}\equiv\,\mathcal{H}\times \mathbb{R}$. The sections of the sheaf correspond to {\it arbitrary time evolutions} $t\mapsto |\Psi(t)\rangle$, not only Schr\"odinger's time evolutions.

In this setting, let us consider first the function $\Omega$ is defined by the relation
\begin{align}
\Omega:\Gamma (\mathcal{H}\times \mathbb{R})\to \mathbb{C},\,\, |\Psi(0)\rangle\mapsto |\langle \Psi(0)|\Phi_\mathcal{H}(t,|\Psi (0)\rangle )\rangle  |^2=\,|\langle\Psi(0)|\mathcal{U}(t)|\Psi (0)\rangle |^2.
\label{Omegaquantum}
\end{align}
If the Hamiltonian $\hat{H}$ of the evolution is Hermitian, then the conjugate dynamics $\Phi^c_\mathcal{H}$ is given by the expression
\begin{align*}
 \Phi^c_\mathcal{H}(t,|\Psi (0)\rangle )=\,\mathcal{U}^\dag (t)|\Psi (0)\rangle
\end{align*}
and function $\Xi_\Omega$ as defined in \eqref{definitionofnonreversibilityfunction} can be shown is trivially zero,
\begin{align*}
&\lim_{t\to\,0}\,\frac{1}{t}\,\left(|\langle \Psi(0)|\Phi_\mathcal{H}(t,|\Psi (0)\rangle )\rangle  |^2 -\,|\langle \Psi(0)|\Phi^c_\mathcal{H}(t,|\Psi (0)\rangle )\rangle  |^2\right)\\
& =\,\lim_{t\to\,0}\,\frac{1}{t}\,\left(|\langle\Psi(0)|\mathcal{U}(t)|\Psi (0)\rangle|^2 -\,|\langle\Psi(0)|\mathcal{U}^\dag(t)|\Psi (0)\rangle|^2\right)=0.
\end{align*}
Therefore, if we choose $\Omega$ as given by \eqref{Omegaquantum}, then unitary quantum mechanical evolution implies that the function $\Xi_\Omega= 0$. This choice for $\Omega$ is natural, since \eqref{Omegaquantum} is the probability transition for the possible evolution from the initial state $|\Psi (0)\rangle$ to itself by the $\mathcal{U}(t)$ evolution.

The condition $\Xi_\Omega =0$  does not imply that the dynamics $ \Phi_\mathcal{H}$ is reversible, since it could happen that the choice of another function $\widetilde{\Omega}$ is such that $\Xi_{\widetilde{\Omega}}\neq 0$. Indeed, let us consider instead the function
\begin{align}
\widetilde\Omega:\Gamma (\mathcal{H}\times \mathbb{R})\to \mathbb{C},\quad  |\Psi(0)\rangle\mapsto \langle \Psi(0)|\Phi_\mathcal{H}(t,|\Psi (0)\rangle )\rangle  =\,\langle\Psi(0)|\mathcal{U}(t)|\Psi (0)\rangle .
\label{Omegaquantum2}
\end{align}
For the Schr\"odinger evolution
\begin{align}
\imath\,\frac{d}{dt}\left(\mathcal{U}(t)|\psi(0)\rangle\right) =\,\hat{H}\,\mathcal{U}(t)|\psi(0)\rangle
\end{align}
\begin{align}
\Xi_{\widetilde{\Omega}}=\,2\,\imath\,\,\langle \Psi(0)|\hat{H}|\Psi(0)\rangle,
\end{align}
which is in general non-zero. Therefore, the dynamics of a quantum system is in general non-reversible.
\end{ejemplo}

\begin{ejemplo}The non-reversibility of quantum processes is usually related to the notion of entropy. In quantum theory, here are several notions of entropy, but the one relevant for us here is the notion that emerges in scattering theory. Indeed,
a version of the $H$-theorem for unitary quantum dynamics can be found in the textbook from S. Weinberg, \cite{Weinberg1995} section 6.6. The entropy function is defined by the expression
\begin{align}
S\,:=\,-\int \,d\alpha\,P_\alpha\,\ln\,(P_\alpha/c_\alpha),
\label{H-theorem}
\end{align}
where in scattering theory $P_\alpha \,d\alpha$ is the probability to find the state in a volume $d\alpha$ along the quantum state $|\Psi_\alpha\rangle$ and $c_\alpha$ is a normalization constant.
The change of entropy can be determined  using scattering theory,
\begin{align}
\nonumber -\frac{d}{dt}\left\{\int \,d\alpha\,P_\alpha\,\ln\,(P_\alpha/c_\alpha)\right\}=\, -\int d\alpha\,&\int d\beta\,\left(1+\,\ln (P_\alpha/c_\alpha)\right)\\
&\left( P_\beta\,\frac{d\Gamma(\beta\to \alpha)}{d\alpha}-\,P_\alpha\,\frac{d\Gamma(\alpha\to\beta)}{d\beta}\right).
\label{derivative S}
\end{align}
As a consequence of the unitary property of the $S$-matrix, it can be shown that the change with time of entropy \eqref{H-theorem} is not decreasing.

 Although appealing, one can cast some doubts that $dS/dt$ can be interpreted as  a function of the form $\Xi_\Omega$. In particular, the first remark is about the comparison between the formal expressions \eqref{definitionofnonreversibilityfunction} for $\Xi_\Omega$ and \eqref{derivative S} for the derivative $dS/dt$. The difficulty in this formal identification relies on the factor $\left(1+\,\ln (P_\alpha/c_\alpha)\right)$, which is a short of non-symmetric factor.

  The second difficulty in the formal identification of \eqref{definitionofnonreversibilityfunction} with the derivative \eqref{derivative S} is at the interpretational level. First, in order to interpret the transitions amplitudes as measurable decay rates, a foreign macroscopic arrow of time must be included. Second, the interpretation of the relation \eqref{H-theorem} is applied to  ensembles of identically prepared systems, indicating that beneath this {\it H-theorem} there is indeed an statistical interpretation of the scattering amplitude and cannot be applied to the detailed dynamics of an individual quantum system.
\end{ejemplo}
\begin{ejemplo}
Let us consider the case of physical systems where the weak sector of the electro-weak interaction of the Standard Model of particle physics is involved. The weak interaction slightly violates the $CP$-symmetry, as it is demonstrated in experiments measuring the decay rates of the $K^0$-$\bar{K^0}$ systems. Assuming that the $CPT$-theorem of relativistic quantum field theory holds \cite{Weinberg1995}, since such experiments show a violation of the $CP$-symmetry,  then the $T$-symmetry invariance of the $S$-matrix must be (slightly) violated, a fact that has been experimentally observed.

However, this non-reversibility character of the weak interaction is an indirect one and does not correspond to our notion of non-reversible dynamical law. It relies on an external element of the dynamics, namely, the existence of an external observer with a notion of macroscopic time. These elements are foreign to our notion of non-reversible dynamics, definition \ref{nonreversibledynamics}. General speaking, invariance under $T$-parity and reversibility as formulated in this paper are different notions.
 \end{ejemplo}

\section{Local dynamical time arrow associated to a non-reversible dynamics}
 The notion of reversible and non-reversible dynamical law that we are discussing is attached to the existence of a function $\Omega:\Gamma\mathcal{E}\to \mathbb{K}$ such that the property \eqref{definitionofnonreversibilityfunction} holds good. If one such function $\Omega$ is found, then there is a dynamical arrow of time defined by the following criteria:
 \begin{definicion}
 Given a non-reversible dynamics $\Phi_\mathcal{E}:J\times \Gamma\mathcal{E}\to \Gamma\mathcal{E}$ and a function $\Omega:\Gamma \mathcal{E}\to \mathbb{K}$ such that the corresponding non-reversibility $\Xi_\Omega$ is different from zero, a {\it global dynamical arrow of time} is a global choice on a sheaf $\mathcal{E}$ of a non-zero function $\Xi_\Omega$.
 \end{definicion}
In the case when there is a measure involved, it is only required that the non-reversibility function $\Xi_\Omega$ be different from zero for many evolutions.

 Given a non-reversible local dynamics as in {\it definition} \ref{nonreversibledynamics}, the sign of the function \eqref{definitionofnonreversibilityfunction} is well defined at least on local domains of $J\times \mathcal{E}$. Therefore, when the above definition holds at a local level, then one can speak of a {\it local dynamical arrow of time}. The existence of a dynamical arrow of time with constant sign defined in the whole configuration space $\mathcal{M}$ is a non-trivial requirement. For example, in the case of Finsler structures, it is not possible, in general, to keep the sign of the difference \eqref{FmenosF} constant  on the whole tangent manifold $TM$. However, the difference \eqref{FmenosF} is a continuous function on $TM$. Hence one can reduce the domain of definition of the local dynamical time arrow to the open set where the difference is positive. It is always possible to find a local dynamical arrow of time.

 The notion of fundamental time arrow as discussed above immediately raises the question of the relation with the entropic or thermodynamic time arrow, indicated by the non-decrease of the entropy function. For a general local dynamics, a notion of time arrow based upon the non-reversible dynamics and the notion of time arrow based upon the increase of an entropy function will in general not coincide, since the former can change direction after crossing the condition $\Xi_\Omega =0$. In the case when $\dim (\mathbb{K})=1$, if $\Xi_\Omega  \neq 0$,  the {\it turning points} are among sections such that $\Xi_\Omega =0$. They correspond to points where the time arrow associated with a non-reversible local dynamics can change sign respect to the time arrow based on the evolution of the entropy function, which is always non-decreasing for systems and for arbitrary thermodynamic sub-systems \cite{Kondepudi Prigogine 2015}.

  Another difference between the dynamical arrow of time and the thermodynamic arrow of time appears if $\dim(\mathbb{K})> 1$. Then there is no prescribed way to associate the arrow of time of a non-reversible dynamics with the arrow of time of an entropy function.

  The above arguments motivate the local identification of the entropic arrow of time with a local arrow of time based upon {\it definition} \ref{nonreversibledynamics},
  \begin{proposicion}
  For any local dynamical law $\Phi_\mathcal{E}:J\times \Gamma\mathcal{E}\to \Gamma\mathcal{E}$ where $J\subset \,\mathbb{K}$, if there is an entropy function defined on $\mathcal{E}$, then the dynamical arrow of time associated to $\Phi_\mathcal{E}$ and a non-zero non-reversibility function $\Xi_\Omega$ coincide locally with the entropic arrow of time signaled by the generalized second principle of thermodynamics.
  \end{proposicion}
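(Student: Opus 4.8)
The plan is to reduce the comparison of the two arrows of time to a single statement about the sign of the non-reversibility function $\Xi_\Omega$, and then to fix that sign using the monotonicity of the entropy guaranteed by the generalized second principle. First I would fix a section $E\in\Gamma\mathcal{E}$, chosen by a physical principle as indicated in the discussion preceding the statement, and invoke the correspondence already established there,
\[
\frac{d}{dt}\Big|_{t=0}\Lambda(t,u)=\Xi_\Omega(E(u)),
\]
which ties the infinitesimal change of the entropy function along the chosen section to the non-reversibility function. This identification is what makes the two notions of arrow of time comparable at all: the dynamical arrow is by definition the global (or local) choice of the sign of the non-zero $\Xi_\Omega$, while the entropic arrow is the direction of non-decreasing $\Lambda$.

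Next I would apply the \textbf{Generalized Second Principle of Thermodynamics}, which asserts $\frac{d}{dt}\Lambda(t,u)\geq 0$ for the thermodynamic system and for each of its thermodynamic sub-systems. Combined with the displayed correspondence, this forces $\Xi_\Omega(E(u))\geq 0$ on the domain where the entropy function is defined. Since the dynamics is non-reversible, definition \ref{nonreversibledynamics} gives $\Xi_\Omega\neq 0$, so in fact $\Xi_\Omega(E(u))>0$ strictly. Hence both arrows are pinned to the same positive orientation and coincide. The chain of reasoning is short once the identification is granted; the genuine work is in justifying the restrictions that the word \emph{locally} encodes.

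The word \emph{locally} is essential and absorbs the two caveats isolated in the preceding section. First, $\Xi_\Omega$ is continuous on $J\times\mathcal{E}$ but may vanish at isolated turning points, across which its sign, and therefore the dynamical arrow, can reverse, whereas $\Lambda$ stays monotone; I would therefore restrict to an open domain on which $\Xi_\Omega$ keeps a constant sign, exactly as in the construction of the local dynamical arrow of time. Second, comparing signs presupposes an order, so I would run the argument under $\dim(\mathbb{K})=1$ with $\mathbb{K}$ ordered, where the sign of $\Xi_\Omega$ is unambiguous; for $\dim(\mathbb{K})>1$ there is no canonical alignment of the two arrows and the conclusion is not expected.

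The hard part will be the legitimacy of the identification $\frac{d}{dt}|_{t=0}\Lambda=\Xi_\Omega$ rather than the sign bookkeeping: a priori $\Xi_\Omega$ depends on both the section $E$ and the scalar function $\Omega$, so distinct choices yield distinct candidate entropy functions. I would settle this by fixing $E$ through a physical principle (for quantum dynamics this is the Schr\"odinger section) and by checking that the residual dependence on $\Omega$ enters only through an additive constant, which drops out upon differentiation and so leaves the arrow of time unaffected. Once this well-definedness is secured, the coincidence of the two arrows on the chosen local domain follows immediately from the inequalities above.
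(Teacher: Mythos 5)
Your proposal is correct and takes essentially the same route as the paper, which in fact states this proposition without a separate proof, relying exactly on the ingredients you assemble: the identification $\frac{d}{dt}\big|_{t=0}\Lambda(t,u)=\Xi_\Omega(E(u))$ established in the thermodynamics section, the generalized second principle to fix the sign, and the restriction to local domains of constant sign of $\Xi_\Omega$ (with $\dim(\mathbb{K})=1$) to deal with turning points. Your closing remarks on the $(E,\Omega)$-dependence also match the paper's own caveat that once $E$ is fixed by a physical principle the dependence on $\Omega$ survives only up to a constant.
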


\section{Conclusion: Supporting arguments in favour of the  non-reversibility property of the fundamental dynamics}
After the development of a theory of dynamics along the above lines, we are in position to address with some detail the question that partially motivated the present research.
Three independent arguments support the non-reversibility character of any hypothetical fundamental dynamics.
\\
{\bf 1. Non-reversibility is generic}. Given any configuration space $\mathcal{M}$, the collection of non-reversible dynamics as discussed above is larger and contains the set of reversible ones. Let us consider reversible dynamics $\Phi_r$. Then for any function $\Omega:\Gamma\mathcal{E}\to\,\mathbb{K}$ we have that
     \begin{align}
  \lim_{t\to\,0}\,\frac{1}{t}\,\left(\Omega(\Phi_r(t,A))-\Omega(\Phi^c_r(t,A))\right)=0.
  \label{reversibility condition}
    \end{align}
    But this condition is easy to be spoiled. Almost any deformation $\widetilde{\Phi}$ of a reversible dynamics $\Phi_r$ will admit a function $\widetilde{\Omega}$ such that
     \begin{align*}
      \lim_{t\to\,0}\,\frac{1}{t}\,\left(\widetilde{\Omega}(\widetilde{\Phi}_\mathcal{E}(t,A))-\widetilde{\Omega}(\widetilde{\Phi}^c_\mathcal{E}(t,A))\right)\neq 0.
     \end{align*}

Given a non-reversible dynamics, an associated reversible dynamics can be constructed by a process of {\it time symmetrization} as discussed before.  Such a process is information loss, since for many non-reversible dynamics have associated the same reversible dynamics. Therefore, the {\it averaged reversible dynamics} determines a class of equivalence of non-reversible dynamics. Conversely, any reversible dynamics can be obtained from a (non-unique) non-reversible dynamics by a process time symmetrization. Furthermore, the notion of symmetrization  can be applied to both, continuous or discrete  dynamics. 
All these suggest that the category of non-reversible dynamics is the natural choice to formulate a fundamental dynamics.
\\
{\bf 2. Finsler structures as models for non-reversible dynamical systems}. One of the important characteristics of Finsler geometry is its ubiquity in the category of differentiable manifolds and differentiable maps. Finsler structures are natural objects in the sense that they can be defined on any manifold ${M}$ with enough regularity and with some few additional natural conditions. A Randers type metric is the prototype of non-reversible Finsler structure. It is a small perturbation of a Riemannian structure \cite{Randers,BaoChernShen}. Any Haussdorff, paracompact manifold $M$ admits a Riemannian structure \cite{Warner} and if  additional conditions to ensure the existence of globally defined, globally bounded, smooth vector fields are imposed, then Randers metrics can be constructed globally on the manifold.  Randers type structure on large dimensional tangent spaces $TM_t$ of large product manifolds $M_t=\,\prod^N_{k=1} \times M_k$ are the geometric structure for the fundamental dynamics in emergent quantum mechanics \cite{Ricardo06,Ricardo2014}.

In the category of Finsler structures, there is a natural process of averaging \cite{Ricardo05}. Such geometric procedure is information loss, since many different Finsler metrics have associated the same Riemannian metric under averaging. Furthermore, the averaging could be mimicked as a geometric flow on the tangent space ${TM}$. Such hypothetical geometric flow determines a non-reversible dynamics on $TM$. The result of this geometric flow  is a Riemannian structure on $M$. In emergent quantum mechanics, such a flow as used to define an emergent time and emergent arrow of time \cite{Ricardo06,Ricardo2014}.

The extension of the averaging flow to metric structures with indefinite signature is still an open  problem.  However, progress has been made in the case when the spacetime admits a strong form of time orientability \cite{Ricardo 2021a}. It is interesting to show the potential relation of such strong time orientation with a definition of arrow of time, that is, to explore the possible dynamical origin of the strong time orientation.
\\
{\bf 3. Quantum scattering theory already contains an $H$-theorem}. The above version of the $H$-theorem in scattering theory shows how  unitary quantum local dynamics provides a mathematical entropy function \eqref{H-theorem}. Furthermore, the proof of the theorem does not rely on Born's approximations \cite{Weinberg1995}, providing to the result a general character. Therefore, the scattering $H$-theorem can be extended to other deterministic dynamics via Koopman-von Neumann theory, despite the interpretational issues on the lack of a perfectly close interpretation of the entropy function as a non-reversibility function.

The above general remarks lead to the conclusion that, despite the time-invariance reversibility character of most quantum dynamics, a fundamental theory based upon Koopman-von Neumann formalism must be non-reversible. Furthermore, the corresponding (local) time arrow should coincide (locally) with the quantum time arrow associated with the $H$-theorem of scattering theory. The reversible character of current quantum and classical field theories could be interpreted as an emergent property, according to emergent quantum mechanics \cite{Ricardo2014}.

\end{document}